\documentclass[reqno,12pt,letterpaper]{amsart}

\usepackage{amsmath,amssymb,amsthm,mathtools,microtype}
\usepackage[hidelinks]{hyperref}

\allowdisplaybreaks

\def\arXiv#1{\href{http://arxiv.org/abs/#1}{arXiv:#1}}
\usepackage{array}
\newcolumntype{P}[1]{>{\centering\arraybackslash}m{#1}}

\DeclareMathOperator{\tr}{tr}
\DeclareMathOperator{\Spec}{Spec}
\newcommand{\RR}{\mathbb R}
\newcommand{\cL}{\mathcal L}
\newcommand{\Lone}{\mathcal L_1}
\newcommand{\Ltwo}{\mathcal L_2}
\newcommand{\la}{\langle}
\newcommand{\ra}{\rangle}

\newtheorem*{theo*}{Theorem}

\newtheorem{theo}{Theorem}
\newtheorem{prop}{Proposition}
\newtheorem{lemm}[prop]{Lemma}

\numberwithin{equation}{section}

\title[Optimal relaxation for Witten Lindbladians]
{OPTIMAL RELAXATION FOR WITTEN LINDBLADIANS IN 1D}
\author{Simon Becker and Maciej Zworski}
\RequirePackage{xcolor} 
\providecommand{\DIFdeltex}[1]{} 
\providecommand{\DIFdel}[1]{\texorpdfstring{\DIFdeltex{#1}}{}} 

\renewcommand{\DIFdel}[1]{}
\RequirePackage{listings} 
\lstdefinelanguage{DIFcode}{ 
  moredelim=[il][\color{white}\tiny]{\%DIF\ <\ }, 
  moredelim=[il]{\%DIF\ >\ } 
} 
\lstdefinestyle{DIFverbatimstyle}{ 
	language=DIFcode, 
	basicstyle=\ttfamily, 
	columns=fullflexible, 
	keepspaces=true 
} 
\lstnewenvironment{DIFverbatim}{\lstset{style=DIFverbatimstyle}}{} 
\lstnewenvironment{DIFverbatim*}{\lstset{style=DIFverbatimstyle,showspaces=true}}{} 

\begin{document}

\begin{abstract}
We prove optimal trace-norm relaxation for the one-dimensional
Lindbladian associated with the Witten differential
$a=h\partial_x+V'$  which annihilates the classical Gibbs  density: if  $\lambda_1(h)$ is the first positive eigenvalue of
{$H=a^*a$} , then the relaxation rate is
$\gamma_h=\lambda_1(h)/(2h)$.  It applies to initial operators 
whose Schwartz kernels, after a 
Gibbs conjugation, satisfy $ L^2 $ estimates for the restriction to the diagonal and for the normal derivative. An appendix by Chat GPT 6 
presents a stronger result specialised to positive initial data. 
\end{abstract}

\maketitle

\section{Statement of the result and motivation}
\label{s:int}

With motivation and references provided below, we study the evolution governed by
the following superoperator:
 \begin{equation}
\cL T=\frac{1}h\big(aTa^*-\tfrac12 ( a^*a T + T a^*a ) \big), \ \ \ 
 a := h \partial_x +V' ( x ) . 
 \label{eq:L}
\end{equation}
Since the jump operator, $ a $, is given by a Witten differential
\cite{wit} (and $ H:=a^* a $ is the scalar Witten Laplacian), we refer to $ \cL $ as the {\em Witten Lindbladian}. The assumptions on $ V $ are given in \eqref{eq:assV}. By adding an irrelevant $ h $-dependent constant we assume that 
\begin{equation}
\label{eq:assV0}
 \begin{gathered}
 \int_{\mathbb R}e^{-2V(x)/h}\,dx=1.
 \end{gathered}
\end{equation} 
For initial conditions for the Lindblad evolution, we consider $T\in\mathcal L_1(L^2(\mathbb R))$, the space of trace-class operators on $L^2(\mathbb R)$, 
\[  T u ( x ) = \int_{\mathbb R }  e^{-(V(x)+V(y))/h}r(x,y) u ( y ) dy ,  \ \ \ r \in C^1 , \]
where, with $ d\mu(y):=e^{-2V(y)/h}\,dy$,  $ r $ satisfies
\begin{equation}
\label{eq:assr}
 \begin{gathered}
 r|_{\{(x,x)\}}\in L^2(d\mu),\ \ \ \ \ 
 |\partial_y r(x,y)|\leq G_0(y) \in L^2(d\mu). 
 \end{gathered}
\end{equation}
\begin{theo}
\label{t:main}
There is a constant $C_{h,V}>0$ such that for $ T $ above with $\tr T = 1 $, we have for 
 $ t \geq 0 $ and $\Pi:=|e^{-V/h}\rangle\langle e^{-V/h}|$, 
\begin{equation}
 \|e^{ t \mathcal L } T - \Pi \|_{\Lone}
 \leq C_{h,V}e^{-\gamma_h t}
 \left(
   \|T\|_{\Lone}
   +\|r|_{ \{ (x,x) \} } - 1   \|_{L^2(\mu)}
   +\|G_0\|_{L^2(\mu)}
 \right), 
 \label{eq:main}
\end{equation}
where
 \begin{equation} 
\label{eq:defgam} \gamma_h:=\lambda_1(h)/(2h)>0, \ \ \ \
\lambda_1(h):=\inf(\Spec(H)\setminus\{0\}) , \end{equation} and this exponent is {\em optimal}.
\end{theo}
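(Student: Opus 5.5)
The plan is to separate the dynamics into blocks relative to the ground state. Let $g:=e^{-V/h}$, so that $ag=0$, $\|g\|=1$ by \eqref{eq:assV0}, and $\Pi=|g\rangle\langle g|$. Put $Q:=I-\Pi$. Since $a\Pi=0=\Pi a^*$, a direct computation from \eqref{eq:L} gives
\[
\cL(\Pi S Q)=-\tfrac1{2h}\,\Pi S Q H ,\qquad \cL(QS\Pi)=-\tfrac1{2h}\,H Q S\Pi ,\qquad \cL\Pi=0 .
\]
Hence the coherences evolve explicitly: $e^{t\cL}(\Pi TQ)=\Pi TQ\,e^{-tH/2h}$, and similarly for $QT\Pi$. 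Because $\Spec(H)\cap(0,\lambda_1(h))=\emptyset$ and $QT\Pi$ lies in the range of $Q$, their trace norms are bounded by $e^{-\gamma_h t}\|T\|_{\Lone}$. This also gives optimality. Take $u$ with $Hu=\lambda_1 u$, or, if $\lambda_1$ lies in the essential spectrum, a Weyl sequence for it. For $T=\Pi+\epsilon(|g\rangle\langle u|+|u\rangle\langle g|)$ the quantity $\|e^{t\cL}T-\Pi\|_{\Lone}$ is exactly $2\epsilon e^{-\gamma_h t}$, so no exponent larger than $\gamma_h$ can hold in \eqref{eq:main}.

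It remains to treat the block $QTQ$ together with the population $\Pi T\Pi=(\tr\,\Pi T)\Pi$. Here the jump term $aQTQa^*$ feeds back into all blocks, so the argument passes to kernels. Write $e^{t\cL}T=g\,R_t\,g$, where $R_t$ has kernel $r_t$ and $g$ acts by multiplication. Using $a=g\,h\partial_x\,g^{-1}$ and $a^*a(gf)=g\,h^2Pf$ with $P=-\partial_x^2+\tfrac2hV'\partial_x$, which is self-adjoint on $L^2(\mu)$ and unitarily equivalent to $h^{-2}H$, one obtains
\[
\partial_t r_t=h\Big(\tfrac12(\partial_x+\partial_y)^2-\tfrac1h\big(V'(x)\partial_x+V'(y)\partial_y\big)\Big)r_t .
\]
This is a degenerate diffusion whose noise acts only along the diagonal, a synchronous coupling of two copies of the Gibbs diffusion. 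Two consequences follow.

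First, the diagonal restriction $\rho_t(x):=r_t(x,x)$ satisfies $\partial_t\rho_t=-hP\rho_t$. With $\int\rho_t\,d\mu=\tr T=1$, this yields $\|\rho_t-1\|_{L^2(\mu)}\le e^{-\lambda_1 t/h}\|\rho_0-1\|_{L^2(\mu)}$. This decays faster than needed.

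Second, differentiating the equation in $y$ (and symmetrically in $x$) gives a transport–diffusion equation for $\partial_y r_t$ with an extra potential term $-V''(y)\,\partial_y r_t$. This is the kernel-level version of the intertwining $aH=(H+2hV'')a$. I intend to use it, together with assumption \eqref{eq:assV} on $V$, to bound $\partial_y r_t$ in $L^2(\mu)$ with decay $e^{-\gamma_h t}$, up to a constant $C_{h,V}$. Bounding by $G_0$ from \eqref{eq:assr} is what makes this work for non-smooth data.

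The final step converts these kernel bounds into trace-norm bounds. Write
\[
r_t(x,y)-1=(\rho_t(x)-1)+\int_x^y\partial_s r_t(x,s)\,ds .
\]
The first term gives the operator $|g(\rho_t-1)\rangle\langle g|$. Its trace norm is $\|\rho_t-1\|_{L^2(\mu)}$, as the identity $\|g(\rho_t-1)\|_{L^2(dx)}=\|\rho_t-1\|_{L^2(\mu)}$ shows. The second term is factored as a product of two Hilbert–Schmidt operators with Gaussian-type weights from $g$, which gives a bound by $C_{h,V}\|\partial_y r_t\|_{L^2(\mu)}$. Combining this with the block estimates above, and absorbing the $\|T\|_{\Lone}$ contribution, gives \eqref{eq:main}.

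I expect the main obstacle to be the decay of the derivative $\partial_y r_t$ at the exact rate $\gamma_h$, rather than some smaller rate depending on $\inf V''$, when $V$ is not convex. My first attempt will combine the block decomposition with a spectral argument. On the coherence blocks the rate is exactly $\lambda_1/(2h)$. On $QTQ$ I will try to show, via the Dyson expansion in the jump term and the identity $a e^{-tH/h}=e^{-t(H+2hV'')/h}a$, that the contribution is dominated by $e^{-\lambda_1 t/h}$. Polynomial losses there would be absorbed into the weaker target rate $\gamma_h=\lambda_1/(2h)$.
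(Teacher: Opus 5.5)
There is a genuine gap. The central estimate is only announced, and your fallback plan rests on a factor-of-two error. The parts that are right: your coherence-block computation, the optimality example (essentially the paper's $\Pi+|\nu\rangle\langle\varphi_1|$), and your final split of $r_t-1$ into a rank-one diagonal term plus $\int_x^y\partial_s r_t(x,s)\,ds$, which is exactly the paper's decomposition.

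\textbf{The factor of two.} The diagonal satisfies $\partial_t\rho_t=\tfrac h2\rho_t''-V'\rho_t'=-\tfrac h2 P\rho_t$, not $-hP\rho_t$. Since $P\simeq h^{-2}H$, $\rho_t-1$ decays at exactly $\lambda_1/(2h)=\gamma_h$, not at $\lambda_1/h$. Your own example confirms this: there $\rho_t-1=2\epsilon e^{-\gamma_h t}\operatorname{Re}(u/g)$. So there is no spare factor of two anywhere. Your plan was a Dyson expansion on the $(I-\Pi)T(I-\Pi)$ block, with polynomial losses absorbed into the ``weaker'' rate $\gamma_h$; that plan therefore has no room. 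For data in that block the diagonal still evolves by $e^{tQ}$, with $Q=\tfrac h2\partial_x^2-V'\partial_x$. Its $f_1$-component decays like $e^{-\gamma_h t}$ and is generically nonzero; for $T=|\varphi_1\rangle\langle\varphi_1|$ it equals $e^{-\gamma_h t}\int f_1^3\,d\mu$. For multiwell $V$ the autonomous $(I-\Pi)\cdot(I-\Pi)$ block also appears, heuristically, to relax at a rate comparable to $\gamma_h$, not $2\gamma_h$: think of two synchronously driven particles sitting in different wells. The rate $2\gamma_h$ there is a harmonic-oscillator feature. Any $t^N$ loss would then contradict \eqref{eq:main}.

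\textbf{The missing idea.} What is needed is a \emph{pointwise, $x$-uniform majorant} for $\partial_y r_t$, obtained from a maximum principle rather than from energy or spectral expansions.
\begin{itemize}
\item The function $u_t=\partial_y r_t$ solves $(\partial_t-M+V''(y))u_t=0$.
\item A function of $y$ alone satisfies $Mg=Qg$. Hence $g_t=e^{tS}G_0$, with $S=Q-V''$, solves the same degenerate equation.
\item Comparison (Lemma~\ref{l:comparison}) applied to $g_t-\operatorname{Re}(e^{-i\theta}u_t)$ gives $|\partial_y r_t(x,y)|\le (e^{tS}G_0)(y)$ for all $x$. The hypothesis $V''\ge -C_0$ enters only through the conjugation by $e^{-C_0t}$ and a polynomial barrier, never through the rate.
\item Since $S\simeq -aa^*/(2h)$ and $aa^*$ carries exactly the nonzero spectrum of $H$, one gets $\|e^{tS}\|_{L^2(\mu)\to L^2(\mu)}\le e^{-\gamma_h t}$. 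Non-convexity costs nothing; this is the real content of your intertwining $aH=aa^*a$.
\end{itemize}

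\textbf{The trace-norm step also needs this uniformity.} In the Volterra factorisation the Hilbert--Schmidt weight contains $1/F_-(s)$. It is cancelled by $\int_{x<s}d\mu(x)=F_-(s)$ only when the majorant $v(s)$ is independent of $x$. So a bound by $\|\partial_y r_t\|_{L^2}$ does not close the argument; what is needed is $\|\langle s\rangle v_t\|_{L^2(\mu)}$. The weight $\langle s\rangle$ comes from the smoothing estimate $\|\langle x\rangle e^{tS}g\|_{L^2(\mu)}\le C_{h,V}(1+t^{-1/2})e^{-\gamma_h t}\|g\|_{L^2(\mu)}$. That estimate follows from the form domain of $aa^*$ together with $|V'|^2\ge cx^2-C$. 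It is used for $t\ge 1$, and $t\le 1$ is handled by trace-norm contractivity.

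With these two ingredients the kernel split alone proves the theorem, and the block decomposition is unnecessary.
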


\noindent
{\bf Remarks.} 1.
The regularity assumptions on $r$ are chosen to keep the proof
elementary.  The pointwise differentiability in $y$ is likely to be
replaceable by an appropriate absolute-continuity condition.  We do not
pursue this endpoint question here, but some restrictions on $T$ are always needed, as $\mathcal L$ does not have a spectral gap on 
$ \mathcal L_1 $ (easily seen in the case of $ V ( x ) =x^2$.  
The more substantial problem is
the extension to higher dimensions: our methods are very much one
dimensional.

\noindent
2. Tracing the constants in the proof, one may take
\[
 C_{h,V}\leq
 \max\!\left(
 2e^{\gamma_h},1,
 2A_{h,V}(1+\gamma_h)^{1/2}\mathcal V(V)
 \right),
\]
where (with $ S $ defined in \eqref{eq:QS})
\[
 A_{h,V}
 =\|\langle x\rangle(1-S)^{-1/2}\|_{L^2(\mu)\to L^2(\mu)},
\]
and $ \mathcal V ( V )$ was defined in \eqref{eq:Vconstant}. 
For $0<h\leq h_0$, the assumptions on $V$ give
$A_{h,V}=\mathcal O_V(1)$ and $\gamma_h= \mathcal O_V(1)$.  Thus the $h$-dependence
provided by the proof is contained in $\mathcal V(V)$.  It is
bounded as $h\to0$ in the one-well case, while for a nondegenerate
double well of barrier height $S_0$ the proof gives
\[
 C_{h,V}=\mathcal O_V(\sqrt h\,e^{S_0/h})
       = \mathcal O_V\!\left(\sqrt{h/\gamma_h}\right).
\]
We do not claim that this is the optimal common prefactor in the
estimate.

\noindent
3. When, in the theorem, $T\geq0$, we can replace the hypotheses by
$r|_{\{(x,x)\}}\in L^\infty$ and the assumption that $ \tr HT = \| a T^{\frac12} \|_2^2 < \infty $. The result then follows from
the following inequality proved at the of this introduction:
\begin{equation}
 |\partial_y r(x,y)|\leq G_0(y),\ \ \ \  \|G_0\|_{L^2(\mu)}
 \leq h^{-1}\|r|_{\{ (x,x)\} } \|_{L^\infty}^{\frac12} \| a T^{\frac{1}{2}} \|_{\Ltwo} .
 \label{eq:positive-majorant}
\end{equation}
At our request, Chat GPT 6 upgraded this by weakening the hypothesis to
$r|_{\{(x,x)\}}\in L^q(d\mu)$, $q>1$.  That result, Theorem \ref{t:positive-states},  is presented in the appendix.

\medskip

When $V(x)=x^2/2$, \eqref{eq:L} is, after the standard normalization of
the annihilation operator, the one-mode pure-loss channel, with
transmissivity $e^{-2t}$; see
\cite{HoWe01,Weedbrook12,DPTG17,DPTG18}.  The general potential replaces
the harmonic-oscillator annihilation operator by the Witten differential
$h\partial_x+V'$.  Related first-order factorizations for anharmonic
oscillators appear in \cite{Gao99,NestSaalfrank00,Nest02}.

Lindblad generators \cite{GKSL,Lindblad} provide the standard Markovian
framework for dissipative quantum dynamics.  Recently, they have also
appeared in quantum Gibbs sampling; see \cite{chen,DLL} for finite-dimensional
constructions and \cite{gaz7,beck} for related infinite-dimensional work.
A model based on the simpler equation considered here appeared in
Leng--Ding--Chen--Lin \cite{LDCL}, where Witten differentials were used as
Lindblad jump operators.  The stationary state $\Pi$ is obtained from the
square root $e^{-V/h}$ of the Gibbs density.  The trace norm is natural here since
\[
 |\tr((T-\Pi)O)|\leq \|T-\Pi\|_{\Lone}\|O\|
\]
for every bounded observable $O$.

The diagonal evolution is governed by the Witten Laplacian $H$.  The
theorem says that the full Lindblad evolution, including the off-diagonal
part of the kernel, has the same optimal relaxation rate $\gamma_h$.  For a
self-contained one-dimensional account of the semiclassical asymptotics of
$\lambda_1(h)$, and references, see \cite{MZ}.  If $V$ has exactly one critical point,  a unique nondegenerate minimum
$x_0$, harmonic approximation gives
\begin{equation}
 \gamma_h=V''(x_0)+\mathcal O(h).
 \label{eq:harmonic-gamma}
\end{equation}
For a  multiwell potential, $\gamma_h$ is exponentially small and is governed
by the Eyring--Kramers law.  Thus the result includes the metastable regime.
Related semiclassical Lindblad--Fokker--Planck constructions are discussed in
\cite{hrr,gaz5,kevin,smith}.

We now impose, in addition to the normalization \eqref{eq:assV0},
\begin{equation}
\label{eq:assV}
 \begin{gathered}
 V^{(k)} ( x) = \mathcal O ( \langle x \rangle ), \ k \geq 1 , \ \ \  \ V''(x)\geq-C_0 ,\ \ \ \ xV'(x)\geq x^2/C_1-C_1.
 \end{gathered}
\end{equation} 
These assumptions are chosen for a short proof rather than for maximal
generality.  Since $-H/(2h)$ generates a contraction semigroup on
$L^2(\mathbb R)$, the Davies construction gives the trace-class contraction
semigroup used below; see \cite[\S 1, Remark]{gaz5}.  The remaining bounds are
used in the maximum-principle, weighted, and tail estimates.  We write
\begin{equation}
 H:=a^*a=-h^2\partial_x^2+|V'|^2-hV'',\qquad
 \nu:=e^{-V/h},\qquad \Pi:=|\nu\ra\la\nu|.
 \label{eq:Hnu}
\end{equation}
Then, $a\nu=0$,  $\ker H=\mathbb C\nu$, and,  from \eqref{eq:assV0},  $\|\nu\|_{L^2}=1$.  The assumptions
\eqref{eq:assV} imply that $H$ has compact resolvent and hence that
$\lambda_1(h)>0$.

\begin{proof}[Proof of \eqref{eq:positive-majorant}]
By the spectral theorem, 
$ T=\sum_{j\geq0 }\lambda_j|\varphi_j\ra\la\varphi_j|,$
$ \lambda_j\geq0 $, 
$\sum_{j\geq 0 }\lambda_j=1 $, 
where the $\varphi_j$ are orthonormal in $L^2(\mathbb R)$.  Writing
$\varphi_j=\nu f_j$ we introduce the $\ell^2$-valued function
$ 
 F(x):= \{ \sqrt{\lambda_j} f_j(x) \} _{j\in \mathbb  N } \in   L^2 ( \mathbb R ; \ell^2 ( \mathbb N ), d \mu  ) $. 
It follows that \begin{equation}
 r(x,y)=\sum_{j\geq0} \lambda_jf_j(x)\overline{f_j(y)}
       =\la F(x),F(y)\ra_{\ell^2},
 \qquad r ( x, x)=\|F(x)\|_{\ell^2}^2 .  
 \label{eq:Gfact}
\end{equation}

The definition \eqref{eq:Hnu}, the spectral decomposition above, and the fact that
$a(\nu f)=h\nu f'$ give 
\begin{equation}
\label{eq:E2F}  \begin{split}   \tr HT & = \|aT^{\frac12}\|_{\Ltwo}^2 =
\sum_{j = 0 }^\infty \lambda_j\|a\varphi_j\|_{L^2 ( \mathbb R ) }^2 \\
& = \sum_{j=0}^\infty \lambda_j \| h \nu f'_j \|_{L^2 ( \mathbb R ) }^2 = 
h^2 \int_{\mathbb R } \| F' ( x ) \|_{\ell^2 ( \mathbb N ) }^2 d \mu ( x ) . 
\end{split} \end{equation}
Differentiating \eqref{eq:Gfact} in the weak sense gives
$ 
 \partial_y r(x,y)=\la F(x),F'(y)\ra_{\ell^2}$. 
Hence, by the Cauchy--Schwarz inequality and \eqref{eq:Gfact}, 
\begin{equation}
 |\partial_y r(x,y)|^2
 \leq r(x,x)\|F'(y)\|_{\ell^2}^2
 \leq\|r|_{ \{ (x,x) \} } \|_{L^\infty}\|F'(y)\|_{\ell^2}^2.
 \label{eq:derivative-CS}.
\end{equation}
This and \eqref{eq:E2F} proves \eqref{eq:positive-majorant}. 
\end{proof}

\medskip
\noindent
{\sc Acknowledgements.} This note grew out of discussions between the authors and ChatGPT 5.6 about improving 
an earlier probabilistic argument with a non-optimal decay rate  $ \gamma_h/2 $ \cite{Be26}. In particular, the key Volterra factorisation argument was suggestedby ChatGPT and developed to a substantial extent through these discussions. MZ gratefully acknowledges partial support from the Simons
Foundation through Targeted Grant Award No.~896630, ``Moir\'e Materials
Magic''.

\section{The conjugated kernel and the maximum principle}

In the language of the theorem in \S \ref{s:int}, let $T_0=T$ and $T_t=e^{t\cL}T_0$.  
Using \eqref{eq:Hnu}, write
\begin{equation}
 K_t(x,y)=\nu(x)\nu(y)r_t(x,y).
 \label{eq:factor}
\end{equation}
The identities
 \[
a(\nu f)=h\nu f',\quad
 a ^* a (\nu f)=h\nu(2V'f'-hf'') \]
give the following simple
\begin{lemm}
\label{l:kernel}
The conjugated kernel satisfies
\begin{equation}
 \partial_t r_t=Mr_t,\qquad
 M=\tfrac 12h(\partial_x+\partial_y)^2
   -V'(x)\partial_x-V'(y)\partial_y.
 \label{eq:M}
\end{equation}
If $Rr(x):=r(x,x)$ and
\begin{equation}
 Q=\tfrac 12 h\partial_x^2-V'\partial_x,\qquad S=Q-V'',
 \label{eq:QS}
\end{equation}
then
\begin{equation}
 RM=QR,\qquad
 \partial_yM=(M-V''(y))\partial_y.
 \label{eq:intertwining}
\end{equation}
\end{lemm}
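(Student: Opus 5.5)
Plan: treat Lemma \ref{l:kernel} as a direct computation. The Lindblad equation $\partial_t T_t=\cL T_t$ becomes a PDE for the Schwartz kernel $K_t(x,y)$. Operator composition acts on kernels as follows: $aT$ has kernel $a_x K$, and $Ta^*$ has kernel obtained by applying $\bar a = a$ (real coefficients) in the $y$ variable. So $aTa^*$ has kernel $a_x a_y K$, $a^*aT$ has kernel $(a^*a)_x K$, and $Ta^*a$ has kernel $(a^*a)_y K$, since $H$ is real and symmetric. Hence
\[
\partial_t K=\tfrac1h\big(a_xa_y-\tfrac12 (a^*a)_x-\tfrac12(a^*a)_y\big)K .
\]

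Next I substitute $K=\nu(x)\nu(y)r$ and use the two identities quoted before the lemma, applied variable by variable:
\begin{itemize}
\item $a_x a_y(\nu\nu r)=h^2\nu\nu\,\partial_x\partial_y r$;
\item $(a^*a)_x(\nu\nu r)=h\nu\nu(2V'(x)\partial_x r-h\partial_x^2 r)$, and symmetrically in $y$.
\end{itemize}
Dividing by $h\nu(x)\nu(y)$ gives
\[
\partial_t r=h\partial_x\partial_y r-V'(x)\partial_xr+\tfrac h2\partial_x^2r-V'(y)\partial_yr+\tfrac h2\partial_y^2 r ,
\]
which is $Mr$ with $M=\tfrac h2(\partial_x+\partial_y)^2-V'(x)\partial_x-V'(y)\partial_y$.

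For $RM=QR$, I use the chain rule on the diagonal: $\tfrac{d}{dx}r(x,x)=R(\partial_x+\partial_y)r$. Iterating gives $\tfrac{d^2}{dx^2}Rr=R(\partial_x+\partial_y)^2r$. Also $R(V'(x)\partial_x+V'(y)\partial_y)r=V'(x)R(\partial_x+\partial_y)r=V'\tfrac{d}{dx}Rr$. Combining these, $RMr=\tfrac h2(Rr)''-V'(Rr)'=QRr$.

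For the second intertwining relation, $\partial_y$ commutes with $(\partial_x+\partial_y)^2$ and with $V'(x)\partial_x$. The only nontrivial term is
\[
\partial_y\big(V'(y)\partial_y\big)=V'(y)\partial_y\partial_y+V''(y)\partial_y .
\]
So $\partial_yM=M\partial_y-V''(y)\partial_y=(M-V''(y))\partial_y$.

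There is no real obstacle. The only point needing care is the first step: checking that conjugation by $a^*$ on the right corresponds to applying $a$ (not $a^*$) in $y$, using the reality of $V$. The identities can be checked formally on smooth kernels, which suffices for the algebraic statement.
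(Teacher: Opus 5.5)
Your proposal is correct and follows the paper's route: the paper also gets the lemma by direct computation from $a(\nu f)=h\nu f'$ and $a^*a(\nu f)=h\nu(2V'f'-hf'')$, applied in each variable to the kernel $\nu(x)\nu(y)r_t$. The two intertwining relations are, as in your proposal, the chain rule on the diagonal and the single commutator term $\partial_y\big(V'(y)\partial_y\big)=V'(y)\partial_y^2+V''(y)\partial_y$. Your explicit check that right multiplication by $a^*$ acts on kernels as $a$ in $y$ (by integration by parts, since $V$ is real) is the detail the paper leaves implicit.
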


One striking consequence of \eqref{eq:intertwining} is 
\begin{equation}
 r_t(x,x)=e^{tQ}r_0(x,x).
 \label{eq:diagflow}
\end{equation}
Multiplication by $\nu$ gives the unitary equivalences of (unbounded) operators
on $ L^2 ( d\mu ) $ (on the left) with operators on $ L^2 ( dx ) $ (on the right):
\begin{equation}
Q\simeq- \frac{a^*a }{2h},\qquad
 S\simeq-\frac{aa^*}{2h}.
 \label{eq:unitary}
\end{equation}
The operator $aa^*$ has no zero mode, and its spectrum agrees with the
nonzero spectrum of $H = a^* a $.  Hence
\begin{equation}
 \|e^{tQ}f-\mu(f)\|_{L^2(\mu)}
 \leq e^{-\gamma_ht}\|f-\mu(f)\|_{L^2(\mu)},
 \label{eq:Qgap}
\end{equation}
\begin{equation}
 \|e^{tS}g\|_{L^2(\mu)}
 \leq e^{-\gamma_ht}\|g\|_{L^2(\mu)}.
 \label{eq:Sgap}
\end{equation}
Since $\int r_0(x,x) d\mu ( x )  =\tr T_0=1$, this gives
\begin{equation}
 \|r_t(x,x)-1\|_{L^2(\mu)}
 \leq e^{-\gamma_ht}\|r_0(x,x)-1\|_{L^2(\mu)}.
 \label{eq:diagdecay}
\end{equation}

For the whole-space comparison we will use the second condition in \eqref{eq:assr}
and the maximum principle for degenerate heat equations. Thus we start with
\begin{lemm}
\label{l:comparison}
Let $w \in C^2 ([0,T]\times\mathbb R^2; \mathbb R ) $ have at 
most polynomial growth, and satisfy
\begin{equation}
 (\partial_t-M+V''(y))w\geq0,\qquad w(0,x,y)\geq0.
 \label{eq:comparison-eq}
\end{equation}
Then $w ( t, x, y ) \geq0$ for $ (t,x,y) \in [0,T ] \times \mathbb R^2 $.
\end{lemm}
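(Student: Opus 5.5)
We prove Lemma~\ref{l:comparison} by the standard weak maximum principle for degenerate parabolic operators on unbounded domains, with a polynomially growing barrier to control infinity. Write
\[
P:=\partial_t-M+V''(y)=\partial_t-\tfrac12h(\partial_x+\partial_y)^2+V'(x)\partial_x+V'(y)\partial_y+V''(y).
\]
First we remove the zeroth order term, whose sign is not controlled. Since $V''\geq -C_0$ by \eqref{eq:assV}, set $w=e^{\kappa t}\tilde w$ with $\kappa>C_0$. Then
\[
(\partial_t-M+V''(y)+\kappa)\tilde w\geq 0,
\]
and the zeroth order coefficient $c(y):=V''(y)+\kappa\geq \kappa-C_0>0$ is strictly positive.

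Next we build a barrier. Let $\phi(x,y)=(1+x^2+y^2)^{N}$, with $N$ larger than the growth exponent of $w$, so that $\tilde w+\varepsilon\phi\to+\infty$ as $|(x,y)|\to\infty$ uniformly in $t\in[0,T]$. We compute $(-M+c)\phi$. The second order part $-\tfrac h2(\partial_x+\partial_y)^2\phi$ is $\mathcal O(\phi/(1+x^2+y^2))$ times a constant. The drift part is
\[
V'(x)\partial_x\phi+V'(y)\partial_y\phi=2N(1+x^2+y^2)^{N-1}\big(xV'(x)+yV'(y)\big),
\]
and by $xV'\geq x^2/C_1-C_1$ this is at least $-C\phi$. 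Hence $(\partial_t-M+c)(e^{Bt}\phi)\geq 0$ for $B$ large. We then set
\[
\tilde w_\varepsilon=\tilde w+\varepsilon e^{Bt}\phi .
\]
This satisfies $(\partial_t-M+c)\tilde w_\varepsilon\geq 0$, is positive at $t=0$, and is positive outside a large compact set in $(x,y)$.

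Now suppose $\tilde w_\varepsilon$ had a negative value somewhere. Then it attains a negative minimum on $[0,T]\times\mathbb R^2$ at a point $(t_0,x_0,y_0)$ with $t_0>0$, the point being interior in space. At that point:
\begin{itemize}
\item $\partial_x\tilde w_\varepsilon=\partial_y\tilde w_\varepsilon=0$;
\item the Hessian is positive semidefinite, so the degenerate second derivative in the direction $(1,1)$ satisfies $(\partial_x+\partial_y)^2\tilde w_\varepsilon\geq 0$;
\item $\partial_t\tilde w_\varepsilon\leq 0$, with one-sided derivative if $t_0=T$.
\end{itemize}
Therefore
\[
(\partial_t-M+c)\tilde w_\varepsilon\leq c\,\tilde w_\varepsilon<0,
\]
which is a contradiction. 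So $\tilde w_\varepsilon\geq0$, and letting $\varepsilon\to0$ gives $w\geq 0$.

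The only delicate step is the barrier estimate, namely checking that the growth of $V'$ does not spoil it. The coercivity assumption in \eqref{eq:assV} is exactly what controls the drift term, since it gives a lower bound $-C\phi$, which $B$ absorbs. The remaining terms, coming from the second derivatives and from $V''(y)$, are bounded using $V''\geq -C_0$ together with $\kappa$. Degeneracy of $M$ causes no difficulty, because we use only the weak (not the strong) maximum principle.
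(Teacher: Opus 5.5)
Your proof is correct and follows essentially the same route as the paper. Both arguments use an exponential time rescaling to control the $V''(y)$ term, the polynomial barrier $e^{Bt}(1+x^2+y^2)^N$, and a minimum-point argument that needs only $(\partial_x+\partial_y)^2 v\geq 0$ from the positive semidefinite Hessian. The only cosmetic difference is that you take $\kappa>C_0$, which gives a strictly positive zeroth-order coefficient, and argue at a negative minimum, whereas the paper takes exactly $C_0$, makes the barrier a strict supersolution, and argues at the first zero.
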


\begin{proof}
For $C_0$ in \eqref{eq:assV} put 
\[ z=e^{-C_0t}w,\qquad q(y)=-V''(y)-C_0\leq0,
 \qquad P=\partial_t-M-q(y).
\]
Then \eqref{eq:comparison-eq} gives $Pz\geq0$.  Let $\rho(x,y)=1+x^2+y^2$.  For every integer $k\geq1$,
\eqref{eq:assV} and direct differentiation give $
 M\rho^k\leq C_k\rho^k$ (see \eqref{eq:M} for the definition of $M $).

Choose $m$ so that $|z(t,x,y)|\leq C_T\rho(x,y)^m$ on $[0,T]\times\mathbb R^2$
and put
\[
 \varphi(t,x,y)=e^{At}\rho(x,y)^{m+1}.
\]
Since $q\leq0$, choosing $A>C_{m+1}$ gives $P\varphi>0$.

For $\varepsilon>0$, set $v=z+\varepsilon\varphi$.  It is positive at $t=0$, and it tends to $+\infty$
as $x^2+y^2\to\infty$, uniformly for $0\leq t\leq T$.
If it first vanished at $(t_0,x_0,y_0)$, then
\[
 \partial_tv\leq0,\qquad \partial_xv=\partial_yv=0,
 \qquad \nabla^2_{x,y}v\geq0.
\]
In particular,
$(\partial_x+\partial_y)^2v\geq0$.  At the contact point the drift and
zeroth-order terms vanish, and therefore
\[
 Pv=\partial_tv-\tfrac12 h (\partial_x+\partial_y)^2v\leq0.
\]
But $Pv=Pz+\varepsilon P\varphi>0$, a contradiction.  Thus
$z+\varepsilon\varphi>0$ and letting $\varepsilon\to0+$ proves the claim.
\end{proof}

The next proposition propagates the estimate \eqref{eq:assr} on the initial
condition:
\begin{prop}
\label{p:normal}
For all $t\geq0$, and for $ S $ defined in \eqref{eq:QS},
\begin{equation}
 |\partial_y r_t(x,y)|\leq(e^{tS}G_0)(y),\qquad x,y\in\mathbb R.
 \label{eq:maximum-principle}
\end{equation}
\end{prop}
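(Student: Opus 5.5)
Set $u_t:=\partial_y r_t$ and $g_t:=e^{tS}G_0$. By the second identity in \eqref{eq:intertwining}, $u_t$ solves
\[
 (\partial_t-M+V''(y))u_t=0 .
\]
Now regard $g_t(y)$ as a function of $(x,y)$ that is constant in $x$. On such functions $(\partial_x+\partial_y)^2=\partial_y^2$, so $M$ acts as $\tfrac12 h\partial_y^2-V'(y)\partial_y=Q_y$. Since $S=Q-V''$, this gives
\[
 (\partial_t-M+V''(y))g_t=(\partial_t-S)g_t=0 .
\]
So $u_t$ and $g_t$ solve the same equation. The plan is to apply Lemma~\ref{l:comparison} to $w_\pm:=g_t\mp u_t$. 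Both satisfy $(\partial_t-M+V'')w_\pm=0$, and at $t=0$ we have $w_\pm=G_0\mp\partial_y r_0\geq0$ by \eqref{eq:assr}. The lemma then gives $w_\pm\geq0$, which is $|\partial_y r_t|\leq e^{tS}G_0$. If $r$ is complex-valued, apply the argument to $\operatorname{Re}(e^{i\theta}u_t)$ for each fixed $\theta$; the coefficients of $M$ are real, so this works the same way. Taking the supremum over $\theta$ gives the modulus.

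The main obstacle is regularity and growth. Lemma~\ref{l:comparison} needs $w\in C^2$ with at most polynomial growth, but we only know $r_0\in C^1$, and $G_0\in L^2(d\mu)$ may be neither smooth nor polynomially bounded. My first attempt is to regularize.

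\textbf{Step 1.} Approximate $G_0$ from above by smooth, compactly supported or polynomially bounded majorants. Failing that, approximate $G_0$ in $L^2(\mu)$ and use monotone convergence together with positivity of $e^{tS}$. Positivity of $e^{tS}$ itself follows from the one-variable version of the same maximum principle.

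\textbf{Step 2.} Approximate $r_0$ by smooth $r_0^\varepsilon$ with compactly supported kernels, for instance by mollifying in $(x,y)$ and cutting off, keeping $|\partial_y r_0^\varepsilon|\leq G_0^\varepsilon$ with $G_0^\varepsilon\to G_0$. The operator $M$ is degenerate-parabolic; after the change of variables $s=x+y$, $d=x-y$ it is a diffusion in $s$ with a first-order drift. For smooth data of controlled growth it therefore produces classical solutions of polynomial growth. These can be built, for example, as $\nu^{-1}\otimes\nu^{-1}$ conjugates of the Lindblad evolution of the regularized operator, or through the stochastic representation, where the diffusion moves $x$ and $y$ by the same Brownian motion.

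\textbf{Step 3.} Apply the comparison for each $\varepsilon$, then pass to the limit. Use continuity of $e^{t\cL}$ on $\Lone$ and pointwise (or a.e.\ plus continuity) convergence of $r_t^\varepsilon$ and $\partial_y r_t^\varepsilon$.

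The alternative I would try if Step 2 becomes technical is the stochastic flow directly. Solve $dX=-V'(X)dt+\sqrt h\,dB$ and $dY=-V'(Y)dt+\sqrt h\,dB$ with the same $B$, and write $r_t(x,y)=\mathbb E\, r_0(X_t,Y_t)$. Differentiating in $y$ gives
\[
 \partial_y r_t=\mathbb E\bigl[\partial_y r_0(X_t,Y_t)\,e^{-\int_0^tV''(Y_s)\,ds}\bigr],
\]
because $\partial_y Y_t=\exp(-\int_0^t V''(Y_s)\,ds)$. Its modulus is then at most
\[
 \mathbb E\bigl[G_0(Y_t)\,e^{-\int_0^tV''(Y_s)\,ds}\bigr]=e^{tS}G_0(y)
\]
by Feynman--Kac. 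The lower bound $V''\geq-C_0$ ensures that all these expectations are finite.
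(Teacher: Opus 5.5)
Your main line is the paper's: $u_t=\partial_y r_t$ and $g_t=e^{tS}G_0$ (viewed as constant in $x$) both solve $(\partial_t-M+V''(y))w=0$, and Lemma~\ref{l:comparison} is applied to $g_t-\operatorname{Re}(e^{i\theta}u_t)$ for every $\theta$. You differ in how you pass from smooth data to the stated hypotheses.

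The paper makes this passage at the level of $u$, not $r$. The lemma shows that the semigroup generated by $M-V''(y)$ on $L^2(\mu\otimes\mu)$ is positivity preserving. This is applied to the nonnegative function $G_0(y)-\operatorname{Re}(e^{-i\theta}u_0(x,y))$, which can be approximated in $L^2$ by smooth bounded nonnegative functions; no derivative structure has to be preserved.

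Your Step~2 regularizes $r_0$ itself, and then a cutoff in $y$ produces a term $\chi'(y)\,r_0(x,y)$ that $G_0$ does not control. For instance, take $r_0(x,y)=f(x)$ with $f\in L^2(\mu)$ of size $e^{V(x)/h}\langle x\rangle^{-1}$. Then $G_0=0$, but for $V=x^2/2$ the single-scale cutoff $\chi(x/R)\chi(y/R)$ forces majorants $G_0^\varepsilon$ whose $L^2(\mu)$ norms blow up as $R\to\infty$. So you need separate scales: a fixed $x$-cutoff and a $y$-cutoff sent to infinity first, with the resulting error folded into $G_0^\varepsilon$.

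Likewise, in Step~3, trace-norm continuity only gives $r_t^\varepsilon\to r_t$ in $L^2(\mu\otimes\mu)$. The bound on $\partial_y r_t$ must then come from the uniform domination and weak limits, not from pointwise convergence of $\partial_y r^\varepsilon_t$. This yields an a.e. statement, which is all the Volterra step uses.

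Your stochastic alternative is a genuinely different and rather clean proof. Synchronous coupling, $\partial_y Y_t=e^{-\int_0^tV''(Y_s)\,ds}$ and Feynman--Kac give the bound for $C^1$ data with no comparison lemma and no $C^2$ regularity of $u_t$. Its cost is twofold. First, you must identify the conjugated Lindblad kernel with $\mathbb E\,r_0(X_t,Y_t)$, i.e. prove uniqueness for $\partial_t r=Mr$. Second, you must justify differentiating under $\mathbb E$; proving the integrated inequality $|r_t(x,y_2)-r_t(x,y_1)|\leq\big|\int_{y_1}^{y_2}e^{tS}G_0\big|$ is the easiest route. Note also that $V''\geq-C_0$ only bounds the weight by $e^{C_0t}$. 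Finiteness of $\mathbb E\,G_0(Y^y_t)$ comes instead from $G_0\in L^2(\mu)$ and the transition density of $Y$ for $t>0$.
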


\begin{proof}
Put $u_t=\partial_y r_t$.  By \eqref{eq:intertwining},
\[
 \partial_tu_t=(M-V''(y))u_t.
\]
Let $g_t=e^{tS}G_0$ and regard $g_t(y)$ as a function of $(x,y)$.  Since $M$
reduces to $Q$ on functions of $y$ alone, it satisfies the same equation.
For bounded smooth data, Lemma~\ref{l:comparison}, applied for every
$\theta\in\mathbb R$ to
\[
 g_t(y)-\operatorname{Re}(e^{-i\theta}u_t(x,y)),
\]
gives \eqref{eq:maximum-principle}.  The same lemma shows that the closed
semigroup generated by $M-V''(y)$ is positivity preserving.  Since
$u_0$ and $G_0(y)$ belong to $L^2(\mathbb R^2 ; d\mu(x)d\mu(y))$, the order inequalities
\[
 -G_0(y)\leq\operatorname{Re}(e^{-i\theta}u_0(x,y))\leq G_0(y)
\]
pass to the $L^2$ closure.  This proves the result for the stated
$G_0\in L^2(\mu)$.
\end{proof}

For quadratic confinement the Volterra estimate below requires one spatial
weight.  It is obtained using smoothing for positive times:
\begin{lemm}
\label{l:smoothing}
There is $C_{h,V}>0$ such that, for $g\in L^2(\mu)$ and $t>0$,
\begin{equation}
 \|\la x\ra e^{tS}g\|_{L^2(\mu)}
 \leq C_{h,V}(1+t^{-\frac{1}{2}})e^{-\gamma_ht}\|g\|_{L^2(\mu)}.
 \label{eq:weighted-smoothing}
\end{equation}
\end{lemm}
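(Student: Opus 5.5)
The plan is to reduce the weighted bound to the unweighted spectral gap \eqref{eq:Sgap}, together with a form-domain estimate that controls $\la x\ra$ by $(1-S)^{1/2}$. First, I claim
\begin{equation*}
 \|\la x\ra f\|_{L^2(\mu)}\leq A_{h,V}\|(1-S)^{1/2}f\|_{L^2(\mu)},\qquad A_{h,V}:=\|\la x\ra(1-S)^{-1/2}\|_{L^2(\mu)\to L^2(\mu)}<\infty .
\end{equation*}
To prove it, use the unitary equivalence \eqref{eq:unitary}: conjugating by $\nu$ turns $-S$ into $aa^*/(2h)$ on $L^2(dx)$, and $\la x\ra$ commutes with multiplication by $\nu$. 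So it suffices to show $\|\la x\ra u\|^2\leq C\big(\la aa^*u,u\ra+\|u\|^2\big)$. Writing $aa^*=-h^2\partial_x^2+|V'|^2+hV''$, we get
\begin{equation*}
 \la aa^*u,u\ra\geq \int\big(|V'|^2+hV''\big)|u|^2\,dx .
\end{equation*}
By \eqref{eq:assV}, $xV'\geq x^2/C_1-C_1$ gives $|V'|\geq |x|/C_1-C_1/|x|$, so $|V'|^2\geq x^2/(2C_1^2)-C$. Also $V''\geq -C_0$. Hence $|V'|^2+hV''\geq c\la x\ra^2-C'$, which yields the claim with a constant depending on $h,V$.

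Next comes the smoothing step. Since $S$ is self-adjoint on $L^2(\mu)$ with $S\leq-\gamma_h$ by \eqref{eq:unitary} and the discussion preceding \eqref{eq:Sgap}, the spectral theorem gives
\begin{equation*}
 \|(1-S)^{1/2}e^{tS}g\|\leq \sup_{\lambda\geq\gamma_h}(1+\lambda)^{1/2}e^{-\lambda t}\,\|g\| .
\end{equation*}
To bound the supremum, write $\lambda=\gamma_h+s$ with $s\geq0$. Then $(1+\gamma_h+s)^{1/2}\leq(1+\gamma_h)^{1/2}+s^{1/2}$ and $s^{1/2}e^{-st}\leq (2et)^{-1/2}$, so
\begin{equation*}
 (1+\lambda)^{1/2}e^{-\lambda t}\leq e^{-\gamma_h t}\big((1+\gamma_h)^{1/2}+(2et)^{-1/2}\big)\leq (1+\gamma_h)^{1/2}(1+t^{-1/2})e^{-\gamma_h t}.
\end{equation*}
Combining this with the first step gives \eqref{eq:weighted-smoothing} with $C_{h,V}=A_{h,V}(1+\gamma_h)^{1/2}$, which matches Remark~2.

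The main obstacle is the first step, specifically making the form inequality rigorous on the form domain of $aa^*$. I would first prove it for $u\in C_c^\infty$, which is a form core, and then extend by closure. The lower bound on $|V'|^2$ near $x=0$ needs care: the term $C_1/|x|$ blows up there, so I would treat $|x|\leq R$ separately by absorbing it into the constant.
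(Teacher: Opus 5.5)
Your proof is correct and follows the paper's own argument. It first establishes the form bound $\|\la x\ra f\|_{L^2(\mu)}\leq C_{h,V}\|(1-S)^{1/2}f\|_{L^2(\mu)}$, via the unitary equivalence with $1+aa^*/(2h)$ and the lower bound $|V'|^2+hV''\geq c\la x\ra^2-C$; it then bounds $(1-S)^{1/2}e^{tS}$ by the spectral theorem using $\Spec(S)\subset(-\infty,-\gamma_h]$, where your explicit supremum computation recovers the constant $A_{h,V}(1+\gamma_h)^{1/2}$ recorded in Remark~2.
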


\begin{proof}
As stated in \eqref{eq:unitary}, $1-S$ is unitarily equivalent to
$1+aa^*/(2h)$.  The assumptions \eqref{eq:assV} imply
$|V'(x)|^2\geq cx^2-C$, while
\[
 aa^*=-h^2\partial_x^2+|V'|^2+hV''.
\]
Hence its quadratic form gives
\begin{equation}
 \|\la x\ra f\|_{L^2(\mu)}
 \leq C_{h,V}\|(1-S)^{1/2}f\|_{L^2(\mu)}.
 \label{eq:graph-weight}
\end{equation}
The spectral theorem and the fact that $\Spec(S)\subset(-\infty,-\gamma_h]$ give
\[
 \|(1-S)^{1/2}e^{tS}\|_{L^2(\mu)\to L^2(\mu)}
 \leq C_{h,V}(1+t^{-1/2})e^{-\gamma_ht}.
\]
Combining the two estimates proves the lemma.
\end{proof}

\section{The Volterra factorisation and proof of Theorem \ref{t:main}}

Define
\begin{equation}
 F_-(s)=\int_{-\infty}^s e^{-2V(x)/h}\,dx,
 \qquad F_+(s)=\int_s^\infty e^{-2V(x)/h}\,dx,
 \label{eq:tails}
\end{equation}
and
\begin{equation}
 \mathcal V(V)^2
 =\int_{\mathbb R}F_-(s)F_+(s)e^{2V(s)/h}\la s\ra^{-2}\,ds.
 \label{eq:Vconstant}
\end{equation}

\begin{lemm}
\label{l:tails}
For $ V $ satisfying \eqref{eq:assV}, $\mathcal V(V)<\infty$.
\end{lemm}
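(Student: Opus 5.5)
We need to show that the integrand $F_-(s)F_+(s)e^{2V(s)/h}\la s\ra^{-2}$ is integrable. Since all quantities are continuous and positive, it suffices to control the tails $s\to+\infty$ and $s\to-\infty$. By the symmetry $x\mapsto -x$, which exchanges $F_-$ and $F_+$ and preserves \eqref{eq:assV}, it is enough to treat $s\to+\infty$.

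For large $s$ we use the trivial bound $F_-(s)\leq 1$ from \eqref{eq:assV0}. The task then reduces to showing that
\[
 F_+(s)e^{2V(s)/h}=\int_s^\infty e^{-2(V(x)-V(s))/h}\,dx
\]
is bounded by $C\la s\ra^{-1}$, or at least by a constant. Either bound suffices, because $\la s\ra^{-2}$ is integrable.

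To bound this quantity we use the confinement in \eqref{eq:assV}. From $xV'(x)\geq x^2/C_1-C_1$ we obtain $V'(x)\geq x/C_1-C_1/x\geq x/(2C_1)$ for $x\geq s_0$ with $s_0$ large. Hence, for $x\geq s\geq s_0$,
\[
 V(x)-V(s)=\int_s^x V'\geq \frac{x^2-s^2}{4C_1}\geq \frac{s(x-s)}{2C_1}.
\]
Therefore
\[
 \int_s^\infty e^{-2(V(x)-V(s))/h}\,dx\leq \int_0^\infty e^{-s u/(hC_1)}\,du=\frac{hC_1}{s}.
\]
This gives the tail integrand $\mathcal O(s^{-3})$, which is integrable on $[s_0,\infty)$.

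On the compact interval $[-s_0,s_0]$ the integrand is continuous, hence bounded, and contributes a finite amount. The case $s\to-\infty$ is handled identically, using $F_+\leq 1$ and the bound $V'(x)\leq x/(2C_1)$ for $x\leq -s_0$, which also follows from $xV'(x)\geq x^2/C_1-C_1$.

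The only step requiring care is the lower bound on $V(x)-V(s)$, specifically converting the hypothesis $xV'\geq x^2/C_1-C_1$ into a linear-in-$(x-s)$ growth with a slope comparable to $s$. Even the crude slope bound $V'\geq c>0$ would already suffice here, because the weight $\la s\ra^{-2}$ absorbs the rest. No finer information about $V$, such as the bound on $V''$, is needed for this lemma.
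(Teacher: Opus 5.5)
Your proof is correct and follows the paper's argument. Both use $F_-\le 1$, derive $V'(x)\ge cx$ for large $x$ from $xV'\ge x^2/C_1-C_1$, deduce $F_+(s)e^{2V(s)/h}=\mathcal O(s^{-1})$ so that the integrand is $\mathcal O(s^{-3})$, and handle $s\to-\infty$ by symmetry; you simply make the constants explicit and correctly observe that a slope bounded below would already suffice because of the weight $\langle s\rangle^{-2}$.
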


\begin{proof}
For $s$ sufficiently large, \eqref{eq:assV} gives $V'(x)\geq cx$ for
$x\geq s$.  Therefore
\[
 F_+(s)\leq C_{h,V}s^{-1}e^{-2V(s)/h}.
\]
Since $F_-(s)\leq1$, the integrand in \eqref{eq:Vconstant} is
$\mathcal O(s^{-3})$ as $s\to+\infty$.  The argument for 
$ s \to - \infty $  is identical.
\end{proof}

The next lemma provides the crucial Volterra trace factorisation inspired by 
\cite{GK}:

\begin{lemm}
\label{l:volterra}
Suppose that $|u(x,s)|\leq v(s)$ and $\la s\ra v\in L^2(\mu)$.  Then the
operator with kernel
\begin{equation}
 R(x,y):=\nu(x)\nu(y)\int_x^y u(x,s)\,ds
 \label{eq:O}
\end{equation}
is of trace class and
\begin{equation}
 \|R\|_{\Lone}\leq2\mathcal V(V)\|\la s\ra v\|_{L^2(\mu)}.
 \label{eq:Volterra}
\end{equation}
\end{lemm}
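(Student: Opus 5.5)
Write $R$ as a superposition of rank-one operators. Split the kernel according to whether $s$ lies between $x$ and $y$ with $x<s<y$ or with $y<s<x$:
\[
 \int_x^y u(x,s)\,ds=\int_{\mathbb R}\big(\mathbf 1_{x<s}\mathbf 1_{s<y}-\mathbf 1_{y<s}\mathbf 1_{s<x}\big)u(x,s)\,ds .
\]
Hence $R=\int_{\mathbb R}(A_s-B_s)\,ds$, with
\[
A_s:=|\alpha_s\ra\la\beta_s|,\qquad \alpha_s(x)=\nu(x)\mathbf 1_{x<s}u(x,s),\qquad \beta_s(y)=\nu(y)\mathbf 1_{y>s},
\]
and $B_s$ defined analogously with the two indicator directions reversed. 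This is the Volterra structure: for fixed $s$, the truncated kernel factors into a function of $x$ times a function of $y$.

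Next, estimate each rank-one piece by $\|A_s\|_{\Lone}=\|\alpha_s\|_{L^2}\|\beta_s\|_{L^2}$. Using $|u(x,s)|\le v(s)$,
\[
\|\alpha_s\|_{L^2}^2\le v(s)^2F_-(s),\qquad \|\beta_s\|_{L^2}^2=F_+(s).
\]
Therefore $\|A_s\|_{\Lone}\le v(s)\sqrt{F_-(s)F_+(s)}$, and $B_s$ obeys the same bound. To make the $L^2(\mu)$ norm of $\la s\ra v$ appear, insert $1=e^{-V(s)/h}e^{V(s)/h}\la s\ra\la s\ra^{-1}$ and apply Cauchy--Schwarz in $s$:
\[
\int_{\mathbb R}\|A_s\|_{\Lone}\,ds\le\Big(\int \la s\ra^2v(s)^2e^{-2V(s)/h}ds\Big)^{1/2}\Big(\int F_-F_+e^{2V/h}\la s\ra^{-2}ds\Big)^{1/2}=\mathcal V(V)\,\|\la s\ra v\|_{L^2(\mu)} .
\]
Adding the identical contribution from $B_s$ gives the factor $2$ in \eqref{eq:Volterra}. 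Lemma~\ref{l:tails} guarantees that $\mathcal V(V)<\infty$.

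The main technical point is justifying that $R$ equals the Bochner integral $\int (A_s-B_s)\,ds$ in $\Lone$. I would proceed as follows.
\begin{itemize}
\item The map $s\mapsto A_s$ is weakly measurable, since $u$ is measurable; it is enough to take $u$ measurable, or continuous as in the intended application $u=\partial_y r_t$.
\item The bound above shows $\int\|A_s\|_{\Lone}\,ds<\infty$, so the integral converges absolutely in the Banach space $\Lone$.
\item Its kernel is obtained by integrating the kernels of $A_s$ and $B_s$. To check that this gives $R(x,y)$, pair with $f\otimes \bar g$ for $f,g\in C_c$ and use Fubini. Fubini applies because $\int\!\!\int\!\!\int \nu(x)\nu(y)|f||g|\,v(s)\,ds\,dx\,dy<\infty$, which follows from the same Cauchy--Schwarz estimate.
\end{itemize}
This identifies the operator with kernel \eqref{eq:O} as a trace-class operator and yields \eqref{eq:Volterra}, since the trace norm is lower semicontinuous and subadditive under the integral.
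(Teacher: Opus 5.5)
Your proof is correct. It uses the same splitting and the same weighted Cauchy--Schwarz in $s$ as the paper; the difference is only in how the trace norm of each half is bounded.

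The paper does not write $R_+$ as a Bochner integral of rank-one operators. Instead it factors $R_+=A_+B_+$ through $L^2(\mathbb R_s)$, with
$A_+(x,s)=\nu(x)u(x,s)\mathbf 1_{\{x<s\}}w_+(s)^{-1/2}$,
$B_+(s,y)=w_+(s)^{1/2}\nu(y)\mathbf 1_{\{s<y\}}$ and
$w_+(s)=F_-(s)e^{2V(s)/h}\la s\ra^{-2}$.
It then applies $\|A_+B_+\|_{\Lone}\le\|A_+\|_{\Ltwo}\|B_+\|_{\Ltwo}$. The two Hilbert--Schmidt norms are exactly the two factors in your Cauchy--Schwarz step.

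\textbf{What the paper's route buys.} It avoids the bookkeeping you flag as the main technical point: Bochner integrability in $\Lone$ and the identification of the kernel. Composition of Hilbert--Schmidt kernels is standard.

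On that bookkeeping, what you need is strong rather than weak measurability of $s\mapsto A_s$. This follows from strong measurability of $s\mapsto\alpha_s$ in $L^2$ and continuity of $(\alpha,\beta)\mapsto|\alpha\ra\la\beta|$ from $L^2\times L^2$ to $\Lone$. Alternatively, use Pettis' theorem, since $\Lone(L^2(\mathbb R))$ is separable. You also do not need lower semicontinuity: $\|\int A_s\,ds\|_{\Lone}\le\int\|A_s\|_{\Lone}\,ds$ holds directly for Bochner integrals.

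\textbf{What your route buys.} Your quantity $\int\|\alpha_s\|\,\|\beta_s\|\,ds$ is never worse than $\|A_+\|_{\Ltwo}\|B_+\|_{\Ltwo}$, for any choice of weight $w_+$, by Cauchy--Schwarz. It leads to the weight-free bound $\|R_\pm\|_{\Lone}\le\int v(s)\sqrt{F_-(s)F_+(s)}\,ds$. This makes clear that the factor $\la s\ra$ enters only at the final step, where it converts this bound into $\mathcal V(V)\|\la s\ra v\|_{L^2(\mu)}$.
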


\begin{proof}
We write 
\[ R = R_+ + R_-, \ \ \ \ \  R_\pm ( x, y ) := R ( x, y) \mathbf 1_{ \pm x < \pm y }, \]
and for the factorisation of $ R_+$, we define
\[
 w_+(s)=F_-(s)e^{2V(s)/h}\la s\ra^{-2} > 0 . 
\]
We then put 
\[
 A_+(x,s) :=\nu(x)u(x,s){\bf1}_{\{x<s\}}w_+(s)^{-1/2}, \ \ \ 
 B_+(s,y) : =w_+(s)^{1/2}\nu(y){\bf1}_{\{s<y\}}, 
\]
so that (identifying integral kernels with operators),  $ R_+ = A_+ B_+ $. We now have 
\begin{equation}
 \|A_+\|_{\Ltwo}^2
 \leq\int w_+(s)^{-1}v(s)^2F_-(s)\,ds
 =\|\la s\ra v\|_{L^2(\mu)}^2,
 \label{eq:Aplus}
\end{equation}
\begin{equation}
 \|B_+\|_{\Ltwo}^2
 =\int w_+(s)F_+(s)\,ds=\mathcal V(V)^2.
 \label{eq:Bplus}
\end{equation}
This gives $ \| R_+ \|_{\Lone} \leq \mathcal V(V)\|\la s\ra v\|_{L^2(\mu) } $.
For $R_-$, put
$w_-(s)=F_+(s)e^{2V(s)/h}\la s\ra^{-2}$ and
\[
 A_-(x,s):=\nu(x)u(x,s){\bf1}_{\{s<x\}}w_-(s)^{-1/2},\qquad
 B_-(s,y):=w_-(s)^{1/2}\nu(y){\bf1}_{\{y<s\}}.
\]
Then $R_-=-A_-B_-$, and the same estimates give
$\|R_-\|_{\Lone}\leq \mathcal V(V)\|\la s\ra v\|_{L^2(\mu)}$.
Together with the estimate for $R_+$ this gives \eqref{eq:Volterra}.
\end{proof}

\begin{proof}[Proof of the Theorem \ref{t:main}]
Put $B_t=T_t-\Pi$.  For $t>0$, split its kernel as
\begin{equation}
\begin{split}
 B_t(x,y) & =\nu(x)\nu(y)(r_t(x,x)-1)
+\nu(x)\nu(y)(r_t(x,y)-r_t(x,x))\\
 & =:D_t(x,y)+R_t(x,y).
\end{split}
\label{eq:split}
\end{equation}
The first term has rank one, and \eqref{eq:diagdecay} gives
\begin{equation}
 \|D_t\|_{\Lone}
 \leq e^{-\gamma_ht}\|r_0(x,x)-1\|_{L^2(\mu)}.
 \label{eq:Dtrace}
\end{equation}
For the second term,
\[
 r_t(x,y)-r_t(x,x)=\int_x^y\partial_y r_t(x,s)\,ds.
\]
By Proposition~\ref{p:normal}, the integrand is dominated by $v_t=e^{tS}G_0$.  For
$t\geq1$, Lemmas~\ref{l:smoothing} and \ref{l:volterra} give
\begin{equation}
 \|R_t\|_{\Lone}
 \leq C_{h,V}e^{-\gamma_ht}\|G_0\|_{L^2(\mu)}.
 \label{eq:Otrace}
\end{equation}
This proves \eqref{eq:main} for $t\geq1$.  For $0\leq t\leq1$, stationarity
of $\Pi$ and trace-norm contractivity give
\[
 \|B_t\|_{\Lone}\leq\|T_0-\Pi\|_{\Lone}
 \leq\|T_0\|_{\Lone}+1,
\]
and the short interval is absorbed by enlarging $C_{h,V}$.

For optimality, let $H\varphi_1=\lambda_1(h)\varphi_1$, with
$\varphi_1\perp\nu$, and write $\varphi_1=\nu f_1$.  Then
$f_1\in L^2(\mu)$ and $f_1'\in L^2(\mu)$. To see the second claim we note that
\begin{equation*}
  h^2\|f_1'\|_{L^2(d\mu)}^2
 =\|a(\nu f_1)\|_{L^2(dx)}^2
 =\langle H\varphi_1,\varphi_1\rangle_{L^2(dx)}
 =\lambda_1(h)\|\varphi_1\|_{L^2(dx)}^2<\infty .
\end{equation*}
This shows that operator
\[
 T_0=\Pi+|\nu\ra\la\varphi_1|, \ \ \  \ \tr T_0 = 1 +\langle \nu , \varphi_1 \rangle_{L^2( \mathbb R )} = 1, 
\]
has conjugated kernel $r_0(x,y)=1+\overline{f_1(y)}$, so that
$r_0(x,x)-1=\overline{f_1(x)}\in L^2(\mu)$ and
$|\partial_y r_0(x,y)|=|f_1'(y)|$.  Hence it
satisfies the hypotheses of the theorem, and
\[
 \cL(|\nu\ra\la\varphi_1|)=-\gamma_h|\nu\ra\la\varphi_1|.
\]
Thus no larger uniform exponent is possible.
\end{proof}

\renewcommand{\theequation}{A.\arabic{equation}}
\refstepcounter{section}
\renewcommand{\thesection}{A}
\setcounter{equation}{0}

\section*{Appendix by ChatGPT 6}

The following strengthens Remark~3 in \S\ref{s:int}, under the same
assumptions \eqref{eq:assV0} and \eqref{eq:assV}.
Function norms $\|\cdot\|_p$ are in $L^p(\mu)$, and inner products
are linear in the first factor.

\begin{theo}\label{t:positive-states}
 Let $T\geq0$, $\tr T=1$, and let $d(x)=r(x,x)$ be its conjugated
diagonal, as in \eqref{eq:Gfact}. If 
\[ E(T): = \tr HT = \| a T^{\frac12} \|_{\mathcal L_2 }  < \infty  \ \text{ and } \  d\in L^q(d\mu), \ \ 1<q<\infty,  \]
 then
\begin{equation}
 \|e^{t\cL}T-\Pi\|_{\Lone}
 \leq C_{h,V,q}e^{-\gamma_ht}
 \left(\frac{\|d\|_q E(T)}{\lambda_1(h)}\right)^{1/2},
 \qquad t\geq0.
 \label{eq:app-result}
\end{equation}
For $d\in L^\infty(\mu)$, replace $\|d\|_q$ by $\|d\|_\infty$;
the estimate then holds with an absolute constant.
 \end{theo}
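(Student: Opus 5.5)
The plan is to combine the fidelity bound for states close to a pure state with the machinery of \S 2--3, using a short time step to upgrade the $L^q$ hypothesis on $d$ to an $L^\infty$ one.

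\emph{Step 1 (a bound that needs no decay).} Since $T\geq 0$ and $\Pi=|\nu\ra\la\nu|$ is a pure state, the Fuchs--van de Graaf inequality gives
\[
\|T-\Pi\|_{\Lone}\leq 2\bigl(1-\la T\nu,\nu\ra\bigr)^{1/2}.
\]
In the notation of \eqref{eq:Gfact} we have $\int\|F\|_{\ell^2}^2\,d\mu=1$ and $\la T\nu,\nu\ra=\|\int F\,d\mu\|_{\ell^2}^2$, so $1-\la T\nu,\nu\ra=\int\|F-\mu(F)\|^2_{\ell^2}\,d\mu$. The spectral gap of $H$, written through \eqref{eq:unitary} as a Poincar\'e inequality $\lambda_1(h)\int|f-\mu(f)|^2d\mu\leq h^2\int|f'|^2d\mu$, can be applied componentwise. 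Together with \eqref{eq:E2F} this gives
\[
\|T-\Pi\|_{\Lone}\leq 2\bigl(E(T)/\lambda_1(h)\bigr)^{1/2}.
\]
This already controls every bounded time window with absolute constants. It also explains the square-root structure of \eqref{eq:app-result}: the decay must come from $1-\la T_t\nu,\nu\ra$ decaying like $e^{-2\gamma_ht}$.

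\emph{Step 2 (propagating the hypotheses).} $T_t=e^{t\cL}T$ stays positive with trace one. By \eqref{eq:diagflow} its diagonal is $d_t=e^{tQ}d$.

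\begin{itemize}
\item I will check that $E(T_t)\leq E(T)$. One route is to compute $\tr(H\cL T)$ using $[H,a]=$ a multiplication-type commutator and \eqref{eq:assV}. The alternative, cleaner route: by \eqref{eq:intertwining} the $\ell^2$-valued $F'$ evolves by the semigroup of $S$, and $S$ is a contraction on $L^2(\mu)$ by \eqref{eq:Sgap}. The difficulty is that $F_t$ is only defined up to a unitary in $\ell^2$, so it may be safer to argue at the level of the majorant.
\item For $d\in L^q$ I will use hypercontractivity/ultracontractivity of $e^{tQ}$. Since $-Q$ is unitarily equivalent to the confining Witten Laplacian, it should map $L^q(\mu)\to L^\infty(\mu)$ at time $t=1$ with norm $C_{h,V,q}$. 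In the case $d\in L^\infty$ this step is unnecessary, since $e^{tQ}$ is Markov and $\|d_t\|_\infty\leq\|d\|_\infty$.
\end{itemize}

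\emph{Step 3 (exponential decay at the optimal rate).} Apply the argument of the proof of Theorem~\ref{t:main} from time $1$ onward.

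\begin{itemize}
\item The majorant comes from \eqref{eq:positive-majorant} applied to $T_1$: $G(y)=\|d_1\|_\infty^{1/2}\|F_1'(y)\|_{\ell^2}$, with $\|G\|_{L^2(\mu)}\leq h^{-1}\|d_1\|_\infty^{1/2}E(T_1)^{1/2}$.
\item Proposition~\ref{p:normal} propagates this majorant. Lemmas~\ref{l:smoothing} and \ref{l:volterra} then bound the off-diagonal part $R_t$ by $e^{-\gamma_ht}$ times that quantity.
\item The diagonal part $D_t$ is handled by \eqref{eq:diagdecay} from time $1$. The term $\|d_1-1\|_{L^2(\mu)}$ has to be bounded without appearing on the right of \eqref{eq:app-result}. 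For this I will use the factorisation $d-1=\|F\|^2-\|\mu(F)\|^2-\|F-\mu(F)\|^2_{L^2(\mu;\ell^2)}$, Cauchy--Schwarz, and Step 1's Poincar\'e bound. This should give $\|d_1-1\|_2\lesssim\|d_1\|_\infty^{1/2}(E/\lambda_1)^{1/2}$.
\end{itemize}

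The main obstacle is making the constants of Step 3 depend only on $(\|d\|_qE/\lambda_1)^{1/2}$, with no stray factors of $h$ or $\mathcal V(V)$ beyond the allowed $C_{h,V,q}$, and recovering an absolute constant when $d\in L^\infty$. In the $L^\infty$ case I expect instead to prove directly that $1-\la T_t\nu,\nu\ra\leq e^{-2\gamma_ht}\,\|d\|_\infty E(T)/\lambda_1(h)$. To do this I would write
\[
1-\la T_t\nu,\nu\ra=\tfrac12\iint\bigl(d_t(x)+d_t(y)-2\operatorname{Re}r_t(x,y)\bigr)\,d\mu\,d\mu ,
\]
express the integrand via $\int_x^y\partial_sr_t$, and bound it by the squared $S$-evolved majorant using \eqref{eq:Sgap}. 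Combined with Step 1, this would give \eqref{eq:app-result} with the constant $2$.
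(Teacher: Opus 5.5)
Your plan has two genuine gaps, one for each half of the statement. For $1<q<\infty$ everything hinges on Step~2's claim that $e^{Q}$ maps $L^q(\mu)$ into $L^\infty(\mu)$, and that claim is false under \eqref{eq:assV}. Those assumptions force $V\asymp x^2$ at infinity, and in that regime $e^{tQ}$ is hypercontractive but not ultracontractive. Concretely, take $V=x^2/2$ (plus the normalising constant) and $T=|\nu f\ra\la\nu f|$ with $f=c\,e^{\alpha x^2/(2h)}$ and $0<\alpha<1/q$. Then $E(T)<\infty$ and $d=|f|^2\in L^q(\mu)$. However, $e^{tQ}d=c_t\exp\bigl(\alpha e^{-2t}x^2/(h(1-\alpha(1-e^{-2t})))\bigr)$ is unbounded for every $t$. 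So \eqref{eq:positive-majorant} is never available, and Step~3 has no majorant. What is missing is a majorant adapted to $L^q$. The paper uses $b'=2\|F\|\,\|F'\|$, for which H\"older gives $\|b'\|_{L^p(\mu)}\le \tfrac2h\sqrt{\|d\|_qE(T)}$ with $p=2q/(q+1)<2$. It then needs only two facts: $L^p\to L^2$ hypercontractivity, which does follow from a log-Sobolev inequality here, and the comparison $0\le e^{tS}G\le e^{C_0t}e^{tQ}G$. The bounds $|r_t-\tfrac12(d_t(x)+d_t(y))|\le|b_t(x)-b_t(y)|$ and $|d_t(x)-d_t(y)|\le|b_t(x)-b_t(y)|$ are propagated by the comparison principle: both sides solve $\partial_tu=Mu$ on $\{x<y\}$ and vanish on the diagonal, and $b_t'=e^{tS}b'$.

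For $d\in L^\infty$ your argument does give \eqref{eq:app-result}, but only with a constant containing $\mathcal V(V)$ and $A_{h,V}$, which are exponentially large for a double well. Your proposed repair cannot give an absolute constant. You need $1-\la T_t\nu,\nu\ra\lesssim e^{-2\gamma_ht}$, but your sketch does not produce it. Estimating $d_t(x)+d_t(y)-2\operatorname{Re}r_t(x,y)$ by $2\bigl|\int_x^y e^{tS}G_0\bigr|$ is linear in the majorant. It therefore yields only $1-\la T_t\nu,\nu\ra=O(e^{-\gamma_ht})$, and after Fuchs--van de Graaf only the rate $\gamma_h/2$. Nor should the quadratic bound be expected in general. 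It does hold for the harmonic oscillator, where $E(T_t)=e^{-2t}E(T)$. Now take a metastable double well and $T_0$ the normalised restriction of $\nu$ to one well. The configurations in which the two synchronously coupled trajectories of the diffusion generated by $M$ lie in different wells contribute to $1-\la T_t\nu,\nu\ra$ at first order. These configurations persist on the tunnelling time scale, so you should expect decay only at a rate comparable to $\gamma_h$. Likewise, your claim $E(T_t)\le E(T)$ fails when $V''<0$ somewhere, since $\tfrac{d}{dt}E(T_t)=-2\tr(V''aT_ta^*)$; only $E(T_t)\le e^{2C_0t}E(T)$ holds.

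The device that reconciles the optimal rate with an absolute constant is the following. For any state $A$, the operator $A-\Pi$ has at most one negative eigenvalue and zero trace, so $\|A-\Pi\|_{\Lone}=2\|A-\Pi\|\le2\|A-\Pi\|_{\Ltwo}$. The Hilbert--Schmidt norm is the $L^2(\mu\otimes\mu)$ norm of $r_t-1$, which is linear in the $b_t$-majorant. Combined with the Poincar\'e inequality, this gives $\|e^{t\cL}T-\Pi\|_{\Lone}\le3\sqrt2\,h\lambda_1(h)^{-1/2}\|e^{tS}b'\|_{L^2(\mu)}$, and hence the constant $6\sqrt2$ when $p=2$.
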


\begin{proof}
Let $F(x)=\{\sqrt{\lambda_j}f_j(x)\}_j$ be the $\ell^2$-valued
function introduced in the proof of \eqref{eq:positive-majorant}.
Equations \eqref{eq:Gfact}--\eqref{eq:E2F} give
\[
 r(x,y)=\la F(x),F(y)\ra,\qquad d(x)=\|F(x)\|^2,
 \qquad h^2\|F'\|_2^2=E(T).
\]
We first suppose that $F\in C_c^\infty(\RR;\mathbb C^N)$ and
$\|F\|_2=1$; the approximation is given below.

For any positive trace-one operator $A$, the operator $A-\Pi$
has at most one negative eigenvalue, since its quadratic form is
nonnegative on $\nu^\perp$. Its trace is zero, so
\begin{equation}
 \|A-\Pi\|_{\Lone}=2\|A-\Pi\|\leq2\|A-\Pi\|_{\Ltwo};
 \label{eq:app-traceHS}
\end{equation}
see also \cite{CCC}. It therefore suffices to estimate the
Hilbert--Schmidt norm. To this end, put
\begin{equation}
 b(x)=\int_0^x2\|F(s)\|\,\|F'(s)\|\,ds,\qquad
 k(x,y)=r(x,y)-\tfrac12(d(x)+d(y)).
 \label{eq:app-b}
\end{equation}
Then $b$ is bounded and nondecreasing, and
\begin{equation}
 |k(x,y)|\leq |b(x)-b(y)|,\qquad
 |d(x)-d(y)|\leq |b(x)-b(y)|.
 \label{eq:app-initial}
\end{equation}
Indeed, for complex Hilbert-space vectors $u,v$, with
$\alpha=\|u\|$, $\beta=\|v\|$,
\[
 \left|\la u,v\ra-\tfrac12(\alpha^2+\beta^2)\right|
 \leq\tfrac12(\alpha+\beta)\|u-v\|
 \leq\|\alpha u-\beta v\|.
\]
The first inequality follows by Cauchy--Schwarz, the second by expanding
squared norms. Apply them to $u=F(x)$, $v=F(y)$ and use
$\|(\|F\|F)'\|\leq2\|F\|\|F'\|=b'$; also $|d'|\leq b'$.

For $p=2q/(q+1)\in(1,2)$, H\"older also gives
\begin{equation}
 \|b'\|_p\leq2\|F\|_{2q}\|F'\|_2
 =\frac{2}h\sqrt{\|d\|_q E(T)}.
 \label{eq:app-holder}
\end{equation}

To propagate \eqref{eq:app-initial}, use $M,Q,S$ from
\eqref{eq:M}--\eqref{eq:QS} and put $P_t=e^{tQ}$, $d_t=P_td$, $b_t=P_tb$,
$k_t=r_t-\tfrac12(d_t(x)+d_t(y))$; $d_t(x)=r_t(x,x)$ by \eqref{eq:diagflow}.
On $x<y$, both $k_t$ and $b_t(y)-b_t(x)$ solve $\partial_tu=Mu$
and vanish on $x=y$. The barrier proof of Lemma~\ref{l:comparison},
with the zeroth-order term omitted, applied to
$b_t(y)-b_t(x)-\operatorname{Re}(e^{i\theta}k_t)$ for every $\theta$,
propagates the first bound. Apply the same argument to
$d_t(y)-d_t(x)$ for the second bound, and reflect for $x>y$.
Smooth majorants converging uniformly to $b'$ justify comparison
when $b'$ is not smooth.

Set $\sigma_t=\|b_t-\mu(b_t)\|_2$. Using
$2\|f-\mu(f)\|_2^2=\iint|f(x)-f(y)|^2\,d\mu(x)d\mu(y)$ gives
\[
 \|d_t-1\|_2\leq\sigma_t,\qquad
 \|k_t\|_{L^2(\mu\otimes\mu)}\leq\sqrt2\,\sigma_t.
\]
The $L^2(\mu\otimes\mu)$ norm of
$\tfrac12(d_t(x)+d_t(y))-1$ is $\|d_t-1\|_2/\sqrt2$.
Since $\partial_xQ=S\partial_x$, we have $b_t'=e^{tS}b'$.
Thus \eqref{eq:app-traceHS} and the Poincar\'e inequality associated with
\eqref{eq:unitary}--\eqref{eq:Qgap} give
\begin{equation}
 \|e^{t\cL}T-\Pi\|_{\Lone}
 \leq3\sqrt2\,\sigma_t
 \leq\frac{3\sqrt2\,h}{\sqrt{\lambda_1(h)}}
       \|e^{tS}b'\|_2.
 \label{eq:app-key}
\end{equation}

The confinement assumptions \eqref{eq:assV} give $c|x|\leq |V'(x)|\leq C|x|$,
$xV'(x)>0$, near infinity. Writing $\rho=e^{-2V/h}$ and defining $ m $ to the median of $ \mu $ ($\int_\mu^\infty 
\rho ( s ) ds = \frac12 $), these bounds give,
for large $x>0$,
\[
 F_+(x)\asymp_{h,V}\frac{\rho(x)}x,\qquad
 \int_m^x\frac{ds}{\rho(s)}\lesssim_{h,V}\frac{1}{x\rho(x)},\qquad
 \log\frac{1}{F_+(x)}\lesssim_{h,V}x^2.
\]
Hence $F_+(x)\log(1/F_+(x))\int_m^x\rho(s)^{-1}\,ds$ is bounded with the similar argument for $ F_- $.
 Thus $\mu$ satisfies a logarithmic Sobolev inequality
\cite[Theorem~1]{BR}, and \cite[Theorem~6]{Gross} gives
$\|P_{t_p}\|_{L^p\to L^2}\leq1$ for some $t_p<\infty$; take $t_2=0$.
Since $V''\geq-C_0$, comparison gives
$0\leq e^{tS}G\leq e^{C_0t}P_tG$ for $G\geq0$.
The spectral bound for $S$ in \eqref{eq:Sgap} therefore yields
\begin{equation}
 \|e^{tS}b'\|_2
 \leq e^{-\gamma_h(t-t_p)}e^{C_0t_p}\|b'\|_p,
 \qquad t\geq t_p.
 \label{eq:app-gain}
\end{equation}
Equations \eqref{eq:app-holder}--\eqref{eq:app-gain} prove \eqref{eq:app-result}
for $t\geq t_p$. For $t\leq t_p$, contractivity, the pure-state
trace-distance formula and Cauchy--Schwarz give
\[
 \|e^{t\cL}T-\Pi\|_{\Lone}
 \leq2\sqrt{1-\tr(\Pi T)}
 \leq2\sqrt{E(T)/\lambda_1(h)},
\]
using $H\geq\lambda_1(h)(I-\Pi)$. Since $\|d\|_q\geq1$, this covers
short times. For $q=\infty$, take $p=2$, $t_p=0$; the constant is $6\sqrt2$.

Finally, \eqref{eq:E2F} and $d\in L^q(\mu)$ show that 
 $F\in H^1(\mu;\ell^2)\cap L^{2q}(\mu;\ell^2)$.
For finite $q$, component and spatial cutoffs, then mollification and
normalization, give smooth finite-dimensional $F_n$ converging in both
spaces, with $\|F_n\|_2=1$. Thus
$\|T_n-T\|_{\Lone}\leq2\|F_n-F\|_2\to0$, and the energies and
diagonal norms converge. Contractivity passes \eqref{eq:app-result} to $T$.
For $q=\infty$, the same approximation gives
$\|F_n\|_\infty\leq(1+o(1))\|F\|_\infty$.
\end{proof}

\end{document}